\def\expt{\mathbb{E}}
\def\real{\mathbb{R}}
\def\comp{\mathbb{C}}
\newcommand\oprocendsymbol{\hbox{$\blacksquare$}}
\newcommand\oprocend{\relax\ifmmode\else\unskip\hfill\fi\oprocendsymbol}
\newtheorem{proposition}{Proposition}
\newtheorem{theorem}{Theorem}
\newtheorem{definition}{Definition}
\newtheorem{corollary}{Corollary}
\newtheorem{example}{Example}
\newtheorem{notation}{Notation}
\title{Stochastic Logarithmic Lipschitz Constants:\\A Tool to Analyze  Contractivity of Stochastic Differential Equations}
\author{Zahra Aminzare$^{1}$ 
\thanks{$^{1}$Zahra Aminzare is with the Department of Mathematics,  University of Iowa, IA, USA.
	{\tt\small zahra-aminzare@uiowa.edu}}
}
\def \mc {\mathcal}
\def \mcfg{\mathcal{M}_{F,G}^{(h,W)}}
\def \mcFG{\mathcal{M}_{F,G }^{(h,W)}}
\begin{document}

\maketitle
\thispagestyle{empty}
\pagestyle{empty}

%%%%%%%%%%%%%%%%%%%%%%%%%%%%%%%%%%%%%
\begin{abstract}
 We introduce the notion of stochastic logarithmic Lipschitz constants and use these constants to characterize stochastic contractivity of Itô stochastic differential equations (SDEs) with multiplicative noise. 
We find an upper bound for stochastic logarithmic Lipschitz constants
based on known logarithmic norms (matrix measures) of the Jacobian of the drift and diffusion terms of the SDEs. 
We discuss noise-induced contractivity in SDEs and common noise-induced synchronization in network of SDEs and illustrate the theoretical results on a noisy Van der Pol oscillator. We show that a deterministic Van der Pol oscillator is not contractive. But, adding a multiplicative noise makes the underlying SDE stochastically contractive. 
\end{abstract}

%%%%%%%%%%%%%%%%%%%%%%%%%%%%%%%%%%%%%
\section{Introduction}\label{sec:intro}

\noindent Contraction theory is a methodology for assessing the global convergence of trajectories of a dynamical system to each other instead of convergence to a pre-specified attractor. 
Given a vector norm 
with its induced matrix norm, the \textit{Logarithmic norm} of a linear operator $A$  is defined as the directional derivative of the matrix norm in the direction of $A$ and evaluated at the identity matrix, \cite{Lewis1949,
Hartman1961}. This definition can be extended to any nonlinear operator using the notion of \textit{logarithmic Lipschitz constant} \cite{Soderlind}. Logarithmic Lipschitz constants of a nonlinear vector field or the logarithmic norm of the Jacobian of the vector field can characterize the contraction property of a nonlinear system. 

 Studying contractivity of ordinary differential equations (ODEs) and reaction-diffusion partial differential equations using logarithmic norms and logarithmic Lipschitz constants is a well-established research topic (see e.g. \cite{dahlquist, Demidovich1961, Demidovich1967, Yoshizawa1966, 
Yoshizawa1975, Soderlind2, Loh_Slo_98, arcak2011_Automatica, Simpson-Porco_Bullo_2014, russo_dibernardo_sontag13, contraction_survey_CDC14, Coogan_Arcak_2015, Margaliot_Sontag_Tuller, 2016_Aminzare_Sontag, Coogan_2019, CV_J_Bullo_2020, D_J_Bullo_2021}). However, there are not too many attempts to study contractivity of non-deterministic systems and in particular, Itô stochastic differential equations (SDEs). 
In \cite{slotine_stochastic,Han_Chung_2021} contraction theory is studied using stochastic Lyapunov function and incremental stability. 
 In \cite{2013_Tabareau_Slotine,LG-TM-MM-2021,newman_2018} contraction theory is studied for random dynamical systems.
 In \cite{2013_Pham_Slotine,2019_Bouvrieand_Slotine} contractivity is generalized to Riemannian metrics and Wasserstein norms, respectively.
 In \cite{Dani_Chung_Hutchinson_2015}, stochastic contraction is studied for Poisson shot noise and finite-measure Lévy noise. 
This work takes a step forward and extends contraction theory to SDEs using generalized forms of logarithmic norms and logarithmic Lipschitz constants. 

Stochastic contraction theory can be used to study the stability of SDEs and to characterize the synchronization behavior of networks of nonlinear and noisy systems.  Synchronization induced by common noise has been observed experimentally  
and confirmed theoretically  
in many networks of nonlinear dynamical systems without mutual coupling. Indeed, this kind of synchronization is equivalent to the stochastic contraction of SDEs that we study in Section \ref{sec:noise-induced:contrac} below.  Therefore, extending contraction theory to SDEs can be beneficial for characterizing networks' synchronization. 

In \cite{Ahmad_Raha_2010}, the authors introduced stochastic logarithmic norms and used them to study the stability properties of linear SDEs.  Analog to the deterministic version, stochastic logarithmic norms are proper tools for characterizing the contractivity of linear SDEs, but they are not directly applicable to nonlinear SDEs. Our main contributions are to generalize the notion of logarithmic Lipschitz constants, which generalize the logarithmic norms to nonlinear operators, and use them to study the contractivity of nonlinear SDEs. 

 The remainder of the paper is organized as follows. 
Section~\ref{sec:background} reviews logarithmic Lipschitz constants of deterministic nonlinear operators and  contraction properties of ODEs. 
Sections~\ref{sec:main-definitions} and ~\ref{sec:main-results} contain the  definition of stochastic logarithmic Lipschitz constants and main results on characterizing the stochastic contractivity of SDEs.
Section~\ref{sec:noise-induced:contrac} discusses how noise can induce stochastic contractivity and synchronization and illustrates the results in a numerical example. 
 Section~\ref{sec:conclusion} is the conclusion and discussion. 
Some of the proofs are given in an Appendix, Section~\ref{sec:appendix}.

%%%%%%%%%%%%%%%%%%%%%%%%%%%%%%%%%%%%%
\section{Background}\label{sec:background}
 In this section we review the definitions of logarithmic norms and logarithmic Lipschitz constants and explain how they are helpful to study contraction properties of ODEs.

\begin{definition}\textit{(\textbf{Logarithmic norm})} Let $(\mc X,\|\cdot\|_{\mc X})$ be a finite dimensional normed vector space over $\real$ or $\comp$. The space $\mathcal{L}(\mc X,\mc X)$ of linear transformations $A\colon \mc X \to \mc X$ is also a normed vector space with the induced operator norm $\|A\|_{\mc X\to \mc X}=\sup_{\|x\|_{\mc X}=1}\|Ax\|_{\mc X}.$ The \textit{logarithmic norm} (or matrix measure) of $A$ induced by $\|\cdot\|_{\mc X}$ is defined as the directional derivative of the matrix norm,
\begin{equation*}
\mu_{\mc X}[A]\;=\;\displaystyle\lim_{h\to 0^+}\frac{1}{h}\left(\|I+hA\|_{\mc X\to \mc X}-1\right),
\end{equation*}
where $I$ is the identity operator on $\mc X$. 
\end{definition}

\begin{definition}\label{def:LLC}\textit{(\cite{Soderlind}, \textbf{Logarithmic Lipschitz constants})} Assume $F\colon \mc Y\subseteq \mc X \to \mc X$ is an arbitrary function. Two generalizations of the logarithmic norms are 
the strong least upper bound (s-lub) and least upper bound (lub) \textit{logarithmic Lipschitz constants}, which are respectively defined by 

{\small{\begin{align*}\label{defM+}
&M_{\mc X}^{+}[F]=\displaystyle\sup_{u\neq v\in \mc Y}\lim_{h\to0^{+}}\frac{1}{h}\left(\frac{\|u-v+h(F(u)-F(v))\|_{\mc X}}{\|u-v\|_{\mc X}}-1\right),\\
&M_{\mc X}[F]=\displaystyle\lim_{h\to0^{+}}\sup_{u\neq v\in \mc Y}\frac{1}{h}\left(\frac{\|u-v+h(F(u)-F(v))\|_{\mc X}}{\|u-v\|_{\mc X}}-1\right).
\end{align*}}}
\end{definition}

\begin{proposition}\textit{(\cite{Soderlind, aminzare-sontag}, \textbf{Some properties of logarithmic Lipschitz constants})} 
 \noindent $M_{\mc X}^+$ and $M_{\mc X}$ are sub-linear, i.e.,  for $F, F^i\colon \mc Y\to \mc X,$ 
 and $\alpha\geq0$ 
 (similar properties hold for $M_{\mc X}$): 
 \begin{itemize}
\item $M_{\mc X}^+[F^1+F^2]\leq M_{\mc X}^+[F^1]+M_{\mc X}^+[F^2]$,  
 \item$M_{\mc X}^{+}[\alpha F]=\alpha M_{\mc X}^{+}[F],$ and 
 \item $M_{\mc X}^+[F]\leq M_{\mc X}[F].$
 \end{itemize}
\end{proposition}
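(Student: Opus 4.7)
The plan is to verify the three bullets directly from the definition of $M_{\mc X}^+$, using only the triangle inequality and a linear change of variable in $h$; the analogous identities for $M_{\mc X}$ will follow from exactly the same manipulations, with the supremum taken before rather than after the limit in $h$.

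For the sub-additive bound, I would start from the algebraic identity
\[
(u-v) + h\bigl[(F^1 + F^2)(u) - (F^1 + F^2)(v)\bigr] = \tfrac{1}{2}\bigl[(u-v) + 2h(F^1(u) - F^1(v))\bigr] + \tfrac{1}{2}\bigl[(u-v) + 2h(F^2(u) - F^2(v))\bigr],
\]
apply the triangle inequality to the right-hand side, and split $\|u-v\|_{\mc X} = \tfrac{1}{2}\|u-v\|_{\mc X} + \tfrac{1}{2}\|u-v\|_{\mc X}$ on the left to subtract off the constants. Dividing through by $h\|u-v\|_{\mc X}$ and substituting $h' = 2h$ on the right yields, for each fixed $u \neq v$, a bound whose $h \to 0^+$ limit is at most the sum of the corresponding right-derivatives built from $F^1$ and from $F^2$. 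Taking the supremum over $u \neq v$ on both sides and using the generic bound $\sup(A + B) \leq \sup A + \sup B$ then delivers $M_{\mc X}^+[F^1 + F^2] \leq M_{\mc X}^+[F^1] + M_{\mc X}^+[F^2]$.

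For positive homogeneity with $\alpha > 0$, I would apply the change of variable $h' = \alpha h$ inside
\[
\frac{1}{h}\!\left(\frac{\|(u-v) + h\alpha(F(u) - F(v))\|_{\mc X}}{\|u-v\|_{\mc X}} - 1\right) = \frac{\alpha}{h'}\!\left(\frac{\|(u-v) + h'(F(u) - F(v))\|_{\mc X}}{\|u-v\|_{\mc X}} - 1\right),
\]
and then pass to the limit and the supremum; the case $\alpha = 0$ is immediate since both sides vanish. The third bullet, $M_{\mc X}^+[F] \leq M_{\mc X}[F]$, is the generic inequality $\sup_{u \neq v}\lim_{h\to 0^+} \varphi(u,v;h) \leq \lim_{h\to 0^+}\sup_{u \neq v} \varphi(u,v;h)$ applied to the shared difference-quotient integrand $\varphi$ appearing in both definitions.

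I do not expect a single difficult obstacle; the one point that is not purely formal is verifying that the right-hand limit in the definition of $M_{\mc X}^+$ actually exists for each fixed $u \neq v$. I would dispose of this with the standard observation that $h \mapsto h^{-1}(\|x + hy\|_{\mc X} - \|x\|_{\mc X})$ is non-decreasing in $h > 0$ by convexity of the norm and bounded below by $-\|y\|_{\mc X}$, so the limit always exists in $\real$. With that in hand, each of the three steps above reduces to the triangle inequality plus an affine rescaling of $h$, and the $M_{\mc X}$ versions follow from the same lines by commuting the supremum past each inequality \emph{before} sending $h \to 0^+$.
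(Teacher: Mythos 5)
Your proof is correct. The paper itself gives no proof of this proposition (it is quoted from \cite{Soderlind, aminzare-sontag}), and your argument --- the convex-combination identity plus triangle inequality for sub-additivity, the rescaling $h'=\alpha h$ (resp.\ $h'=2h$) for homogeneity, the generic ``$\sup\lim \le \lim\sup$'' for the third bullet, and the monotonicity of $h\mapsto h^{-1}\bigl(\|x+hy\|_{\mc X}-\|x\|_{\mc X}\bigr)$ to guarantee the one-sided limits exist --- is exactly the standard route taken in those references, so there is nothing substantive to compare.
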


\begin{proposition}\label{prop:logarithmic:constants:norm}\textit{(\cite{Soderlind2}, \textbf{Relationship between logarithmic Lipschitz constants and logarithmic norms})} 
For finite dimensional space $\mc X$, the (lub) logarithmic Lipschitz constant $M_{\mc X}$ generalizes the  logarithmic norm $\mu_{\mc X}$, i.e., for any matrix $A$, $M_{\mc X} [A]=\mu_{\mc X}[A].$ Furthermore, by the definitions, 
$M_{\mc X}[A] = M^+_{\mc X}[A] =\mu_{\mc X}[A].$
 Let $\mc Y$ be a connected subset of $\mc X$. For a globally Lipschitz and continuously differentiable function $F\colon \mc Y\to \real^n$,
$
\sup_{x\in \mc Y}\mu_{\mc X}[J_F(x)]\leq M_{\mc X}[F],
$
 where $J_F$ denotes the Jacobian of $F$. Moreover, if $\mc Y$ is a convex subset of $\mc X$, then
\[
\displaystyle\sup_{x\in \mc Y}\mu_{\mc X}[J_F(x)]= M_{\mc X}[F].
\]
 \end{proposition}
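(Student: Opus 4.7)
The proposition consists of four assertions: (i) $M_{\mc X}[A]=\mu_{\mc X}[A]$ for a linear operator $A$; (ii) the same equality with $M_{\mc X}^{+}$ in place of $M_{\mc X}$; (iii) the lower bound $\sup_{x\in\mc Y}\mu_{\mc X}[J_F(x)] \le M_{\mc X}[F]$ on connected $\mc Y$; and (iv) equality in (iii) when $\mc Y$ is convex. I would address them in that order. For (i) and (ii), linearity reduces the inner quotient to $\|w+hAw\|/\|w\|$ with $w=u-v$, whose supremum over $u\ne v$ is exactly the induced operator norm $\|I+hA\|$, so the definition of $M_{\mc X}[A]$ coincides with that of $\mu_{\mc X}[A]$. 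For $M_{\mc X}^{+}[A]$ the supremum and limit are swapped; the standard min--max inequality already gives $M_{\mc X}^{+}\le M_{\mc X}$, and the reverse uses compactness of the unit sphere in the finite-dimensional $\mc X$ together with convexity of $h\mapsto\|w+hAw\|$: pick unit $w_h$ achieving $\|I+hA\|$, extract a subsequential limit $w^{\ast}$, and conclude that the one-sided derivative along $w^{\ast}$ recovers $\mu_{\mc X}[A]$.

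Next, for (iii), I would fix $x_0\in\mc Y$ and a unit vector $e$, and evaluate the defining expression for $M_{\mc X}^{+}[F]$ at $u=x_0+\eta e,\ v=x_0$ for small $\eta>0$, which is admissible because $\mc Y$ contains a neighbourhood of $x_0$. The $C^1$ expansion $F(u)-F(v)=\eta J_F(x_0)e+o(\eta)$ reduces the quotient, after dividing by $\|u-v\|=\eta$, to $(\|e+h(J_F(x_0)e+o(1))\|-1)/h$. Letting $h\to 0^+$ yields the right-directional derivative of the norm at $e$ in the direction $J_F(x_0)e+o(1)$; sending then $\eta\to 0^+$ and taking the supremum over unit $e$ recovers $\mu_{\mc X}[J_F(x_0)]$. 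Hence $\mu_{\mc X}[J_F(x_0)]\le M_{\mc X}^{+}[F]\le M_{\mc X}[F]$, and (iii) follows after taking $\sup$ over $x_0$.

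The main obstacle is (iv). Writing $G_h(x)=x+hF(x)$, the classical mean-value inequality on the convex set $\mc Y$ with $F\in C^1$ yields $\sup_{u\ne v\in\mc Y}\|G_h(u)-G_h(v)\|/\|u-v\|\le \sup_{x\in\mc Y}\|I+hJ_F(x)\|$, so $M_{\mc X}[F]\le\lim_{h\to 0^+}(\sup_{x\in\mc Y}\|I+hJ_F(x)\|-1)/h$. The delicate step is identifying this last limit with $\sup_{x\in\mc Y}\mu_{\mc X}[J_F(x)]$: pointwise in $x$, $(\|I+hA\|-1)/h$ decreases to $\mu_{\mc X}[A]$ as $h\to 0^+$ by convexity of $h\mapsto\|I+hA\|$, so the convergence is \emph{from above} and the interchange of $\sup_x$ with $\lim_{h\to 0^+}$ is not automatic. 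I would justify the interchange using the second-order comparison $\|I+hA\|\le\|e^{hA}\|+O(h^2\|A\|^2)\le e^{h\mu_{\mc X}[A]}+O(h^2\|A\|^2)$, which yields the uniform estimate $(\|I+hA\|-1)/h\le\mu_{\mc X}[A]+O(h\|A\|^2)$. Because $F$ is globally Lipschitz, $\|J_F(x)\|$ is uniformly bounded on $\mc Y$, so the $O(h)$ remainder is uniform in $x$, the interchange is legitimate, and combining with (iii) yields the equality in (iv).
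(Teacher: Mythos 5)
Your proposal is essentially correct, but note that the paper itself offers no proof of this proposition: it is quoted as a known result from S\"oderlind's work (and the companion facts appear in Aminzare--Sontag's paper on logarithmic Lipschitz norms), so there is no in-paper argument to compare against. Your reconstruction follows the standard route from that literature: for linear $A$ the quotient collapses to $\|I+hA\|$, giving $M_{\mc X}[A]=\mu_{\mc X}[A]$; the equality $M^+_{\mc X}[A]=\mu_{\mc X}[A]$ is a Danskin-type argument, and your compactness step does close once you use monotonicity of the convex difference quotients to transfer the bound $(\|I+h_nA\|-1)/h_n\ge\mu_{\mc X}[A]$ from $h_n$ to any fixed $h\ge h_n$ before passing to the limit $w_n\to w^\ast$; the lower bound $\sup_x\mu_{\mc X}[J_F(x)]\le M^+_{\mc X}[F]\le M_{\mc X}[F]$ via the first-order expansion at $u=x_0+\eta e$, $v=x_0$ is the standard localization argument (the $o(1)$ perturbation of the direction is harmless because the one-sided directional derivative of the norm is $1$-Lipschitz in its direction argument); and for the convex case the mean-value inequality plus the uniform estimate $(\|I+hA\|-1)/h\le\mu_{\mc X}[A]+O(h\|A\|^2)$, with $\|J_F\|$ bounded by the global Lipschitz constant, correctly justifies interchanging $\lim_{h\to0^+}$ with $\sup_x$ — you rightly flag that this interchange is the delicate point, since the pointwise convergence is from above. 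The only caveat is your claim that $u=x_0+\eta e$ is admissible ``because $\mc Y$ contains a neighbourhood of $x_0$'': the proposition only assumes $\mc Y$ connected, and a connected set need not have interior points; this step (and indeed the meaning of $J_F$ on $\mc Y$) implicitly requires $x_0$ to be an interior point, i.e.\ the openness hypothesis that the cited references actually impose, so it is an imprecision inherited from the statement rather than a flaw in your argument, but it deserves an explicit remark.
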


\begin{definition} \textit{(\textbf{Contractive ODE})} Consider 
\begin{align}\label{ODE}
\dot x =F(x,t),
\end{align}
where $x\in \mc Y\subset \real^n$ is an $n-$dim vector describing the state of the system,  $t\in[0,\infty)$ is the time, and $F$ is an $n-$dim nonlinear vector field. Assume that $\mc Y$ is convex and $F$ is continuously differentiable on $x$ and continuous on $t$. 
The system \eqref{ODE} is called contractive if there exists $c>0$ such that for any two solutions $X$ and $Y$ that remain in $\mc Y$, and $t\geq0$,
$\|X(t)-Y(t)\| \leq e^{-ct}\|X(0)-Y(0)\|.$
\end{definition}

\noindent In the following theorem and corollary, we find a value for the \textit{contraction rate} $c$ using logarithmic Lipschitz constant of the vector field $F$ and the logarithmic norm of the Jacobian of $F$ induced by a norm $\|\cdot\|_{\mc X}$ on $\real^n$.

\begin{theorem}\textit{(\cite[Proposition 3]{contraction_survey_CDC14}, \textbf{Contractivity of ODEs using logarithmic Lipschitz constants})}\label{thm:contractivity}
For a given norm $\|\cdot\|_{\mc X}$ and any two trajectories $X(t)$ and $Y(t)$ of \eqref{ODE}  and any $t\geq0$ the following inequality holds
\[\|X(t)-Y(t)\|_{\mc X} \leq e^{M_{\mc X}^+[F] t}\|X(0)-Y(0)\|_{\mc X}.\] 
In particular, if $M_{\mc X}^+[F]<0$, then \eqref{ODE}  is contractive. 
\end{theorem}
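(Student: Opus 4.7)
The plan is to bound the growth of the scalar function $V(t) := \|X(t) - Y(t)\|_{\mc X}$ along the pair of trajectories, then apply a Gr\"onwall-type argument. Since $V$ need not be differentiable (norms generally are not $C^1$ away from smooth ones), I would work with the upper right Dini derivative
\[
D^+V(t) := \limsup_{h\to 0^+}\frac{V(t+h)-V(t)}{h}.
\]

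The first step is to use the ODE to write, for $h>0$ small,
\[
X(t+h)-Y(t+h) = \bigl(X(t)-Y(t)\bigr) + h\bigl(F(X(t),t)-F(Y(t),t)\bigr) + o(h),
\]
which follows from the continuous differentiability of $F$ in $x$ and continuity in $t$. Taking norms, using the reverse triangle inequality to absorb the $o(h)$ term, and dividing by $h\, V(t)$ (assuming $V(t)\neq 0$; if $V(t)=0$, uniqueness of solutions gives $V\equiv 0$), I obtain
\[
\frac{V(t+h)-V(t)}{h\, V(t)} = \frac{1}{h}\left(\frac{\bigl\|X(t)-Y(t)+h\bigl(F(X(t),t)-F(Y(t),t)\bigr)\bigr\|_{\mc X}}{\|X(t)-Y(t)\|_{\mc X}}-1\right) + o(1).
\]
The second step is to observe that the right-hand side, after passing to $\limsup_{h\to 0^+}$, is bounded above by the quantity inside the outer supremum in the definition of $M_{\mc X}^+[F]$ evaluated at the particular pair $u=X(t)$, $v=Y(t)$, which in turn is $\le M_{\mc X}^+[F]$ by taking the supremum over $u\ne v$. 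Hence
\[
D^+V(t)\;\le\; M_{\mc X}^+[F]\,V(t).
\]

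The third step is a standard Dini-derivative comparison (sometimes called the differential form of Gr\"onwall's lemma): if a continuous nonnegative function $V$ satisfies $D^+V(t)\le c\,V(t)$ on $[0,\infty)$, then $V(t)\le e^{ct}V(0)$. This follows by considering $W(t):=e^{-ct}V(t)$, computing $D^+W(t)\le 0$, and concluding that $W$ is nonincreasing. Applying this with $c=M_{\mc X}^+[F]$ yields the stated bound, and the final ``in particular'' statement follows immediately because $M_{\mc X}^+[F]<0$ forces exponential decay.

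The main subtlety to handle carefully is the order of operations in the definition of $M_{\mc X}^+$: the supremum over $u\ne v$ sits outside the $\lim_{h\to 0^+}$, which is exactly what is needed so that the pointwise inequality for the pair $\bigl(X(t),Y(t)\bigr)$ pushes through to a uniform bound. Were one to attempt the same argument with $M_{\mc X}$ (sup inside the limit), the exchange would not be automatic, which is why $M_{\mc X}^+$ is the correct constant in the statement.
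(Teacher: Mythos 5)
Your argument is correct, and it is essentially the canonical route: estimate the upper Dini derivative of $\|X(t)-Y(t)\|_{\mc X}$ by $M^+_{\mc X}[F]\,\|X(t)-Y(t)\|_{\mc X}$ using the pointwise-in-$(u,v)$ limit inside the definition of $M^+_{\mc X}$, then close with the Dini-derivative comparison (Gr\"onwall) lemma. The paper itself only cites this deterministic result without proof, but its proof of the stochastic analogue (Theorem \ref{thm:stochastic:contractivity}) follows exactly the same template --- difference of trajectories, Dini derivative bound via the logarithmic Lipschitz constant, then the comparison lemma --- so your proposal matches the paper's approach.
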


\begin{corollary} \textit{(\textbf{Contractivity of ODEs using logarithmic norms})}
Under the conditions of Theorem \ref{thm:contractivity}, if  $\mc Y$ is connected and  
$\sup_{(x,t)}\mu_{\mc X}[J_F(x,t)] \leq -c$, for some constant $c>0$ and some norm $\|\cdot\|_{\mc X}$, then \eqref{ODE} is contractive. 
\end{corollary}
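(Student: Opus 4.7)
The plan is to deduce the corollary by chaining two tools already available in the excerpt: the relationship between $\mu_{\mc X}[J_F]$ and $M_{\mc X}[F]$ from Proposition \ref{prop:logarithmic:constants:norm}, and the exponential contraction estimate of Theorem \ref{thm:contractivity}, which requires a bound on the s-lub constant $M_{\mc X}^+[F]$. The task therefore reduces to upgrading a pointwise bound on the logarithmic norm of the Jacobian into a bound on $M_{\mc X}^+[F]$.

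First, since the definition of a contractive ODE already assumes $\mc Y$ to be convex (a property strictly stronger than the connectedness stated in the corollary), I would apply the equality clause of Proposition \ref{prop:logarithmic:constants:norm} to $F(\cdot,t)$ for each fixed $t\geq 0$. Combined with the standing hypothesis $\sup_{(x,t)}\mu_{\mc X}[J_F(x,t)] \leq -c$, this gives, uniformly in $t$,
\[
M_{\mc X}[F(\cdot,t)] \;=\; \sup_{x\in \mc Y}\mu_{\mc X}[J_F(x,t)] \;\leq\; -c.
\]
Next I would invoke the dominance $M_{\mc X}^+[F]\leq M_{\mc X}[F]$ from the sub-linearity proposition to obtain $M_{\mc X}^+[F(\cdot,t)]\leq -c < 0$.

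Plugging this bound into Theorem \ref{thm:contractivity} yields
\[
\|X(t)-Y(t)\|_{\mc X} \;\leq\; e^{-ct}\|X(0)-Y(0)\|_{\mc X},
\]
which is exactly the definition of contractivity with rate $c$.

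There is no genuine obstacle in this argument; the only point worth flagging in passing is that the equality clause of Proposition \ref{prop:logarithmic:constants:norm} requires convexity and not merely connectedness, so it is the convexity inherited from the ODE definition, rather than the explicitly listed connectedness hypothesis, that actually drives the reduction from a Jacobian condition to a bound on $M_{\mc X}^+[F]$.
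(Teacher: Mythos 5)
Your argument is correct and is precisely the intended route: the paper states this corollary without proof, and the evident justification is exactly your chain of Proposition \ref{prop:logarithmic:constants:norm} (equality clause, using the convexity of $\mc Y$ already built into the contractive-ODE setup), the bound $M^+_{\mc X}[F(\cdot,t)]\leq M_{\mc X}[F(\cdot,t)]\leq -c$ uniformly in $t$, and Theorem \ref{thm:contractivity}. Your closing remark that it is convexity, not mere connectedness, that makes the equality clause applicable is also a fair and accurate observation about the corollary's hypotheses.
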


%%%%%%%%%%%%%%%%%%%%%%%%%%%%%%%%%%%%%
\section{Stochastic Logarithmic Lipschitz Constants}\label{sec:main-definitions}

 In this section we generalize the definition of logarithmic Lipschitz constants given in Definition~\ref{def:LLC}. 

\begin{notation}\label{notation}
 We will use the following notations for the rest of the paper. 
\begin{itemize}
\item  $(\mc X,\|\cdot\|_{\mc X})$ is a normed space over $\real^n$ and $ \mc Y\subseteq \mc X$. 
\item $F\colon \mc Y\to \real^n$ is a vector field with components $F_i$. 
\item  $G$ is an $n\times d$ matrix of continuously differentiable  column vectors $G_{j}\colon \mc Y \to \real^n$, for $j=1,\ldots,d$.  
\item $W(t)$ is a $d-$dim Wiener process with components $W_j$.
\item  $\Delta W_j :=W_j(t+h)-W_j(t)= \int_t^{t+h} dW_j(s)$ and $\Delta W = (\Delta W_1, \ldots, \Delta W_d)^\top$
 \item $\Delta W^2_{i,j} := \int_t^{t+h} dW_i(s) \int_t^s dW_j(s')$ and $\Delta W^2$ is a $d\times d$ matrix with components $\Delta W^2_{i,j}$.
 \item For $k=1,\ldots,d$,  $L_k := \sum_{l=1}^n G_{lk}\frac{\partial}{\partial x_l}$.
 \item  $\mcFG$ is an $n-$dim function on $\mc Y$ with components:
\begin{align}\label{mcFG}
hF_i+ \sum_{j=1}^dG_{ij} \Delta W_j+ \sum_{j,k=1}^d L_k G_{ij} \Delta W^2_{j,k}.
\end{align}
\end{itemize}
\end{notation}
Let $\Delta_{jk}= \Delta W_j\Delta W_k+\Delta W^2_{j,k} -\Delta W^2_{k,j}$. Using \cite[Equation~(15.5.26)]{CG:09}, 
$\Delta W^2_{j,k} = \frac{1}{2} (\Delta_{jk}-h\delta_{jk}),$ where  $\delta_{jk}$ is the Kronecker delta. A straightforward calculation shows that
\[\mcFG = h\Big(F - \frac{1}{2}\sum_{i=1}^dJ_{G_j} G_j\Big)+\sum_{i=1}^d G_j\Delta W_j + \mathcal{R}, \]
where  $\mathcal{R} = \frac{1}{2}\sum_{j,k=1}^d L_k G_{ij} \Delta_{jk}$.  

\begin{definition}\textit{(\textbf{Stochastic logarithmic Lipschitz constants})} 
The s-lub and lub \textit{stochastic logarithmic Lipschitz constants} of $F$ and $G$ in the $l$-th mean and induced by $\|\cdot\|_{\mc X}$ are respectively:
{\small{\begin{align*}
&\mc M^{+}_{\mc X,l}[F,G]=\displaystyle\sup_{t}\sup_{u\neq v\in \mc Y}\lim_{h\to0^{+}}\frac{1}{h}\times\\
&\Big(\expt\;\frac{\|u-v+\mcFG(u) -\mcFG (v)\|^l_{\mc X}}{\|u-v\|^l_{\mc X}}-1\Big),\\
&\mc M_{\mc X,l}[F,G]=\displaystyle\sup_{t}\lim_{h\to0^{+}}\sup_{u\neq v\in \mc Y}\frac{1}{h}\times\\
&\Big(\expt\;\frac{\|u-v+\mcFG (u) -\mcFG (v)\|^l_{\mc X}}{\|u-v\|^l_{\mc X}}-1\Big),
\end{align*}}}
where $\expt$ denotes the expected value
\end{definition}

 In \cite{Ahmad_Raha_2010} the authors introduced the notion of stochastic logarithmic norm which is a special case of $\mc M_{\mc X, l}[F,G]$ with linear $F$ and $G_j$, i.e., $F(u) = Au$ and $G_j(u) = B_ju$ for square matrices $A, B_j$s. 

\begin{proposition}  \textit{(\textbf{Some properties of stochastic logarithmic Lipschitz constants})} \label{prop:properties:stochastic:constants}
Let $\alpha>0$ be a constant, $F, F^1$, and $F^2$ be vector functions as described in Notation \ref{notation}, and $G, G^1,$ and $G^2$ be matrices as described in Notation \ref{notation}. The following statements hold.

\noindent 1.  For a zero matrix $G$, $\mc M_{\mc X, l}^{+}[F,0] = l M_{\mc X}^{+}[F]$. 

\noindent 2. $\mc M_{\mc X, l}^{+}[F,G] \leq \mc M_{\mc X, l}[F,G]. $

\noindent 3. Unlike the deterministic ones, the stochastic  logarithmic Lipschitz constants are not sub-linear. However, they satisfy:
  \begin{itemize}
  \item $\mc M_{\mc X, l}^{+}[\alpha F,\sqrt\alpha G] =\alpha \mc M_{\mc X, l}^{+}[ F, G], \text{and}$
 \item $\mc M_{\mc X, l}^{+}[ F^1+F^2, G^1+G^2] \\\leq  \mc M_{\mc X, l}^{+}\Big[ F^1, \frac{G^1+G^2}{\sqrt 2}\Big] +  \mc M_{\mc X, l}^{+}\Big[ F^2, \frac{G^1+G^2}{\sqrt 2}\Big].$
\end{itemize}
  Similar properties hold for $\mc M_{\mc X, l}$.
\end{proposition}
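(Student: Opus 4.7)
The plan is to handle the four assertions in the order listed, since the last one leans on the scaling identity. For (1), substituting $G=0$ makes both stochastic terms in $\mathcal{M}^{(h,W)}_{F,0}$ vanish, so $\mathcal{M}^{(h,W)}_{F,0}(u) - \mathcal{M}^{(h,W)}_{F,0}(v) = h(F(u)-F(v))$ and the expectation is trivial. Setting $a(h) := \|u-v + h(F(u)-F(v))\|_{\mc X}/\|u-v\|_{\mc X}$, I would expand $a(h)^l - 1 = l(a(h)-1) + O\bigl((a(h)-1)^2\bigr)$, note $a(h)-1 = O(h)$, divide by $h$, and let $h \to 0^+$ to pick up the factor $l$; taking $\sup_{u\ne v}$ then identifies the limit with $l\,M^+_{\mc X}[F]$. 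For (2), this is the standard $\sup\lim \le \lim\sup$ inequality: for any fixed $(u_0,v_0)$ the inner expression is bounded above by its supremum over $u\ne v$, and passing first to $\lim_{h\to 0^+}$ and then to $\sup_{(u_0,v_0)}$ yields the inequality.

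For the scaling bullet of (3), I would apply a Brownian rescaling. Using $L_k^{\sqrt\alpha G} = \sqrt\alpha\,L_k^G$, direct substitution gives
\[
\mathcal{M}^{(h,W)}_{\alpha F,\,\sqrt\alpha G}(u) \;=\; \alpha h\,F(u) + \sqrt\alpha \sum_j G_j(u)\,\Delta W_j + \alpha \sum_{j,k} L_k G_{ij}(u)\,\Delta W^2_{j,k}.
\]
By the scaling laws of Brownian motion, $\sqrt\alpha\,\Delta W_j|_h \stackrel{d}{=} \Delta W_j|_{\alpha h}$ and $\alpha\,\Delta W^2_{j,k}|_h \stackrel{d}{=} \Delta W^2_{j,k}|_{\alpha h}$, so the random vector on the right has the same law as $\mathcal{M}^{(h',W)}_{F,G}(u)$ with $h' = \alpha h$. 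Substituting into the definition of $\mc M^+_{\mc X, l}$ and changing variables from $h$ to $h'$ in the outer limit pulls out the factor $\alpha$, yielding the identity.

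For the subadditivity-like bullet, my plan rests on the algebraic identity
\[
\mathcal{M}^{(h,W)}_{2F^1,\,G^1+G^2}(u) + \mathcal{M}^{(h,W)}_{2F^2,\,G^1+G^2}(u) \;=\; 2\,\mathcal{M}^{(h,W)}_{F^1+F^2,\,G^1+G^2}(u),
\]
which follows because $\mathcal{M}^{(h,W)}_{F,G}$ is linear in its $F$-argument and the $G$-argument is shared on the left. Using this, I would write $u-v + \mathcal{M}^{(h,W)}_{F^1+F^2,G^1+G^2}(u) - \mathcal{M}^{(h,W)}_{F^1+F^2,G^1+G^2}(v)$ as the arithmetic mean of the two analogous expressions with drifts $2F^1$ and $2F^2$, then apply the triangle inequality together with the convexity $((a+b)/2)^l \le (a^l+b^l)/2$. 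Dividing by $\|u-v\|^l$, taking expectation, subtracting one, dividing by $h$, and passing to $\sup_t\sup_{u\ne v}\lim_{h\to 0^+}$ yields the upper bound $\tfrac12\mc M^+_{\mc X,l}[2F^1, G^1+G^2] + \tfrac12\mc M^+_{\mc X,l}[2F^2, G^1+G^2]$. Invoking the scaling bullet with $\alpha = 2$ converts each term into $\mc M^+_{\mc X,l}\bigl[F^i, (G^1+G^2)/\sqrt 2\bigr]$, completing the proof; the $\mc M_{\mc X,l}$ versions follow by the same steps with $\sup$ and $\lim$ swapped. The main obstacle will be the algebraic identity itself: the cross term $\sum_{j,k} L_k^G G_{ij}\,\Delta W^2_{j,k}$ is \emph{quadratic} in $G$, so linearity in $F$ alone is not enough; the identity works precisely because both summands on the left keep the same $G$-argument $G^1+G^2$, and this structural constraint is the reason the $\sqrt 2$-rescaling appears in the final bound.
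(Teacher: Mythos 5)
Your proposal is correct and follows essentially the same route as the paper: part 1 by expanding $\Omega^l-1$ around $\Omega\to1$, part 2 by the standard $\sup\lim\le\lim\sup$ argument, and the scaling bullet by exactly the Brownian rescaling identity $\mc M^{(h,W)}_{\alpha F,\sqrt\alpha G}=\mc M^{(\alpha h,\sqrt\alpha W)}_{F,G}$ that the paper uses. For the last bullet, your identity with doubled drifts combined with the $\alpha=2$ scaling step is just a refactoring of the paper's single identity $2\mc M^{(h,W)}_{F^1+F^2,\,G^1+G^2}=\mc M^{(2h,\sqrt2 W)}_{F^1,\frac{G^1+G^2}{\sqrt2}}+\mc M^{(2h,\sqrt2 W)}_{F^2,\frac{G^1+G^2}{\sqrt2}}$, with the triangle-plus-convexity step (which the paper leaves implicit) spelled out correctly.
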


\noindent A proof is given in Appendix, Section \ref{sec:appendix}. 
%
%%%%%%%%%%%%%%%%%%%%%%%%%%%%%%%%%%%%%
\section{Contraction Properties of SDEs}\label{sec:main-results}

 In this section we first define stochastic contractivity and then provide conditions that guarantee contractivity  in SDEs. 
Consider 
\begin{align}\label{SDE:eq}
dX(t) = F(X(t))dt + G(X(t)) dW(t),
\end{align}
where all the terms are as defined in Notation \ref{notation}. Furthermore, we assume that $F$ and $G$ satisfy the Lipschitz and growth conditions: $\exists K_1,K_2>0$ such that $\forall x,y$:
  
$\|F(x)-F(y)\|+\|G(x)-G(y)\|\leq K_1\|x-y\|,$ and
 
$\|F(x)\|^2+\|G(x)\|^2\leq K_2(1+\|x\|^2)$, 

\noindent where $\|\cdot\|$ denotes the Euclidean norm.  Note that for the matrix $G, \|G\|^2= \sum_{i,j}|G_{ij}|^2$. Under these conditions, for any given initial condition $X(0)$ (with probability one) the SDE has a unique non-anticipating solution, i.e., the solution is independent of the Wiener process, see \cite[Chapter 4]{CG:09}. 

\begin{definition}[\textbf{Stochastic contraction}]\label{def:stochastic:contraction}
An SDE described by \eqref{SDE:eq}
 is $l-$th moment contractive if there exists a  constant $c>0$ such that for any solutions $X(t)$ and $Y(t)$ with initial conditions  $X(0)$ and $Y(0)$, and $\forall t\geq0$,
 \begin{equation}\label{eq:moment:stability}
\expt \|X(t) - Y(t)\|_{\mc X}^l \;\leq\; \expt \|X(0) - Y(0)\|_{\mc X}^l \;\; e^{-c t}.
\end{equation}
 \end{definition} 

\begin{theorem}\label{thm:stochastic:contractivity}\textbf{\textit{(Stochastic contraction based on stochastic logarithmic Lipschitz constants)}} 
For any two solutions $X(t)$ and $Y(t)$ of \eqref{SDE:eq} and $\forall t\geq0$,
 \begin{equation}\label{eq:contractivity:1}
\expt \|X(t) - Y(t)\|_{\mc X}^l \;\leq\; \expt \|X(0) - Y(0)\|_{\mc X}^l \;\; e^{\mc M^{+}_{\mc X,l}[F,G] t}.
\end{equation}
Moreover, if $\mc M^{+}_{\mc X,l}[F,G]\leq -c$ for some $c>0$, \eqref{SDE:eq} becomes  $l-$th moment stochastically contractive. 
\end{theorem}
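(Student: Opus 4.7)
The plan is to adapt the proof of Theorem~\ref{thm:contractivity} to the stochastic setting by replacing the Euler one-step predictor with the Milstein-type increment $\mcFG$, which by construction is precisely what the definition of $\mc M^+_{\mc X,l}[F,G]$ is engineered to measure. First I write the Itô--Taylor (Milstein) expansion for each solution: for a short step $h>0$,
\[
X(t+h) = X(t) + \mcFG(X(t)) + R_X(t,h), \quad Y(t+h) = Y(t) + \mcFG(Y(t)) + R_Y(t,h),
\]
where the remainders satisfy $\expt\|R_X(t,h)\|_{\mc X}^l,\,\expt\|R_Y(t,h)\|_{\mc X}^l = o(h)$ under the Lipschitz and linear-growth hypotheses on $F$ and $G$ (cf.\ \cite{CG:09}). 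Subtracting yields
\[
X(t+h)-Y(t+h) = \bigl(X(t)-Y(t)\bigr) + \bigl(\mcFG(X(t))-\mcFG(Y(t))\bigr) + \Delta R,
\]
with $\expt\|\Delta R\|_{\mc X}^l = o(h)$.

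Next, the Wiener increments $\Delta W_j$ and iterated Itô integrals $\Delta W^2_{j,k}$ on $[t,t+h]$ are independent of $\mathcal F_t$. Conditioning on $(X(t),Y(t))=(u,v)$ and applying the definition of $\mc M^+_{\mc X,l}[F,G]$ pointwise gives, for sufficiently small $h$,
\[
\expt\bigl[\|u-v+\mcFG(u)-\mcFG(v)\|_{\mc X}^l \,\big|\, \mathcal F_t\bigr] \leq \|u-v\|_{\mc X}^l\bigl(1 + h\,\mc M^+_{\mc X,l}[F,G] + h\,\epsilon(h,u,v)\bigr),
\]
with $\epsilon(h,u,v)\to 0$ as $h\to 0^+$. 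Taking unconditional expectation, folding in the remainder bound, and passing the limit inside (justified by standard SDE moment estimates) yields the differential inequality
\[
\tfrac{d^+}{dt}\,\expt\|X(t)-Y(t)\|_{\mc X}^l \;\leq\; \mc M^+_{\mc X,l}[F,G]\cdot \expt\|X(t)-Y(t)\|_{\mc X}^l.
\]
Grönwall's inequality then delivers \eqref{eq:contractivity:1}, and the ``moreover'' clause is immediate upon substituting $\mc M^+_{\mc X,l}[F,G]\leq -c$.

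The main obstacle lies in this limit interchange: the definition of $\mc M^+_{\mc X,l}$ has $\lim_{h\to 0^+}$ placed \emph{inside} $\sup_{u\neq v}$, so the error $\epsilon(h,u,v)$ is only guaranteed to vanish pointwise in $(u,v)$, not uniformly. To push the expectation through, I will rely on uniform integrability of $\|X(t)-Y(t)\|_{\mc X}^l$, which follows from the standard moment bounds $\expt\|X(t)\|^{2l},\,\expt\|Y(t)\|^{2l}<\infty$ guaranteed by the linear-growth hypothesis, combined with a dominated-convergence argument. A secondary technical point is controlling the Milstein remainder $\Delta R$ in $l$-th mean under the given Lipschitz hypotheses, which is standard but requires careful accounting of the $L_k G_{ij}$ mixed-derivative terms already present in $\mcFG$.
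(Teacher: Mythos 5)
Your proposal follows essentially the same route as the paper's proof: a Milstein one-step expansion of both solutions, subtraction, bounding the upper Dini derivative of $\expt\|X(t)-Y(t)\|_{\mc X}^l$ through the definition of $\mc M^{+}_{\mc X,l}[F,G]$ (using that $X(t)-Y(t)$ is non-anticipating, hence independent of the Wiener increment), and then the comparison/Gr\"onwall step. You are in fact somewhat more explicit than the paper about the Milstein remainder and the conditioning/limit-interchange issue, which the paper passes over with the non-anticipating remark, so no gap is introduced by your treatment.
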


\begin{proof}
 If $\expt \|X(t) - Y(t)\|_{\mc X}^l=0$, then \eqref{eq:contractivity:1} holds. Therefore, we assume that $\expt \|X(t) - Y(t)\|_{\mc X}^l\neq0$. 
Using  Milstein algorithm \cite[Chapter 15]{CG:09} any solution $X(t)$ can be approximated by 
\begin{equation*}
X(t+h) = X(t) + \mcFG (X(t)).
\end{equation*}
\noindent By subtracting Milstein approximations of $X$ and $Y$, we get
\begin{align}\label{equ:difference}  
\begin{split}
X(t+h)-Y(t&+h) = X(t) -Y(t)\\
&+\mcFG (X(t))-\mcFG (Y(t)).
\end{split}
\end{align}
Taking $\|\cdot\|_{\mc X}$, raising to the power $l$,  taking expected value $\expt$,  subtracting $\expt\|X(t) -Y(t)\|_{\mc X}^l$ from both sides,  dividing by $h$, and taking limit as $h\to0^+$, we get (to fit the equations, we dropped some of $(t)$ arguments):
{\small{\begin{align*}
&\lim_{h\to0^+}\frac{1}{h}\left\{\expt\|X(t+h) -Y(t+h)\|_{\mc X}^l - \expt\|X(t) -Y(t)\|_{\mc X}^l\right\}\\
&=\lim_{h\to0^+}\frac{1}{h}\Big\{\expt\|X(t)-Y(t)+\mcFG(X) -\mcFG (Y)\|^l_{\mc X}\Big. \\
&\qquad\qquad\quad\Big.- \expt\|X(t) -Y(t)\|_{\mc X}^l\Big\}\\
&=\lim_{h\to0^+}\frac{1}{h}
\Big\{\frac{\expt\|X-Y+\mcFG(X) -\mcFG (Y)\|^l_{\mc X}}{ \expt\|X(t) -Y(t)\|_{\mc X}^l}-1\Big\} \\
&\qquad\qquad\qquad\times\expt\|X(t) -Y(t)\|_{\mc X}^l\\
&\leq \mc M^{+}_{\mc X,l}[F,G]\; \expt\|X(t) -Y(t)\|_{\mc X}^l, 
\end{align*}
}}
where the last inequality holds by the definition of $\mc M^{+}_{\mc X,l}[F,G]$ and the fact that $X(t)-Y(t)$ is non-anticipating, and  hence, independent of the Wiener increment $dW$. 
The first term of the above relationships is the upper Dini derivative of $\expt\|X(t) -Y(t)\|_{\mc X}^l$. Hence, 
\begin{align*}
D^+\expt\|X(t) -Y(t)\|_{\mc X}^l \leq \mc M^{+}_{\mc X,l}[F,G]\; \expt\|X(t) -Y(t)\|_{\mc X}^l. 
\end{align*}
 Applying comparison lemma \cite[Lemma 3.4]{HKK:02}, $\forall t\geq0$:
 \begin{align*}
\expt\|X(t) -Y(t)\|_{\mc X}^l \leq  \expt\|X(0) -Y(0)\|_{\mc X}^le^{\mc M^{+}_{\mc X,l}[F,G] t}, 
\end{align*}
which is the desired result. 
\end{proof}

  In this work, inspired by \textit{common} noise-induced synchronization, we assume that all the trajectories realize the \textit{same} Wiener process $W$ and therefore \eqref{equ:difference} is a valid equation.
Studying stochastic contractivity for the trajectories driven by distinct Wiener processes is a topic of future investigations. 

 Next proposition gives an upper bound for $\mc M^{+}_{\mc X,l}[F,G]$ based on the deterministic logarithmic  norms of $J_F$ and $J_{G_j}$, $j=1,\ldots,d$.  The upper bound  makes the result of Theorem \ref{thm:stochastic:contractivity} more applicable, since computing deterministic logarithmic  norms induced by some norms, such as $L^p$ norms and weighted $L^p$  norms for $p=1,2,\infty$ are straightforward. 

\begin{proposition}\textbf{\textit{\small(Relationship between deterministic and stochastic logarithmic Lipschitz constants)}}\label{prop:relation:stochM:M}
Let $F$, $G$, and $W$ be as described in Notation \ref{notation}. Then 
\begin{align}\label{eq1:relation:stochM:M}
\begin{split}
\mc M^{+}_{\mc X,l}[F,G] &\leq lM^+_{\mc X}\Big[F-\frac{1}{2}\sum_jJ_{G_j} G_j\Big]\\
&+\frac{l}{\sqrt{2\pi}}\sum_j (M^+_{\mc X}[G_j] +M^+_{\mc X}[-G_j]).
\end{split}
\end{align}
Furthermore, if $F$ and $G_j$s are continuously differentiable and $\mc Y$ is convex, then the following inequality holds. 
\begin{align}\label{eq2:relation:stochM:M}
\begin{split}
\mc M^{+}_{\mc X,l}[&F,G] \leq l\sup_x\mu_{\mc X}\Big[J_{F-\frac{1}{2}\sum_jJ_{G_j}G_j}(x)\Big]\\
&+ \frac{l}{\sqrt{2\pi}}\sum_j(\sup_x\mu_{\mc X}[J_{G_j}(x)] +\sup_x\mu_{\mc X}[-J_{G_j}(x)]). 
\end{split}
\end{align}
\end{proposition}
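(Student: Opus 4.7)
The plan is to leverage the Milstein-type decomposition of $\mcFG$ derived right before the proposition statement, namely
\[\mcFG(u) - \mcFG(v) = h(\tilde F(u) - \tilde F(v)) + \sum_{j=1}^d (G_j(u) - G_j(v))\Delta W_j + (\mathcal{R}(u) - \mathcal{R}(v)),\]
with $\tilde F = F - \frac{1}{2}\sum_j J_{G_j} G_j$. Setting $z = u-v$, I would estimate $\expt \|z + \mcFG(u) - \mcFG(v)\|_{\mc X}^l$ piece by piece by combining the definition of the deterministic $M^+_{\mc X}$ with the moments of the Wiener increments $\Delta W_j$ and of the Milstein remainder $\mathcal{R}(u) - \mathcal{R}(v)$, then divide by $h$ and take $h\to 0^+$.

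For the drift piece, by the definition of $M^+_{\mc X}$, $\|z + h(\tilde F(u) - \tilde F(v))\|_{\mc X} \le \|z\|_{\mc X}(1 + hM^+_{\mc X}[\tilde F] + o(h))$, which accounts for the $lM^+_{\mc X}[\tilde F]$ contribution once we take $l$-th powers and expectations. For each diffusion column $(G_j(u) - G_j(v))\Delta W_j$, I would split by the sign of $\Delta W_j$: exploiting the positive homogeneity $M^+_{\mc X}[\alpha G] = \alpha M^+_{\mc X}[G]$ for $\alpha \ge 0$, the positive part is controlled by $M^+_{\mc X}[G_j]$ and the negative part by $M^+_{\mc X}[-G_j]$. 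Combining these signed estimates and taking the $l$-th power yields terms involving $(\Delta W_j)^+$ and $(\Delta W_j)^-$, whose Gaussian moments $\expt(\Delta W_j)^\pm = \sqrt{h/(2\pi)}$ produce the coefficient $\frac{l}{\sqrt{2\pi}}$ in the final bound.

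The hard part will be controlling the apparently divergent $O(\sqrt h)$ contributions arising from the $\sqrt h$-scaling of $\Delta W_j$. A naive expansion produces terms of order $\sqrt h$ in $\expt \|z+\mcFG(u)-\mcFG(v)\|^l_{\mc X} - \|z\|^l_{\mc X}$ which would blow up after dividing by $h$. Resolving this requires using both the symmetry $\Delta W_j \stackrel{d}{=} -\Delta W_j$ to cancel odd-order contributions in expectation, and the fact that the Milstein remainder has nonzero mean of order $h$, precisely $\frac{h}{2}\sum_j[(J_{G_j}G_j)(u)-(J_{G_j}G_j)(v)]$, which combines with the Stratonovich shift $-\frac{1}{2}\sum_j J_{G_j}G_j$ inside $\tilde F$ to restore the true It\^o drift $F$ at the $O(h)$ level. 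Since $\|\cdot\|_{\mc X}$ is a general norm and need not be $C^2$, one cannot directly invoke It\^o's formula and must instead argue via one-sided directional derivatives (or a smooth approximation of the norm) when combining the signed bounds and handling the $l$-th power.

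For the Jacobian-based inequality \eqref{eq2:relation:stochM:M}, since $\mc Y$ is convex and $\tilde F, G_j, -G_j$ are continuously differentiable, Proposition \ref{prop:logarithmic:constants:norm} gives $M^+_{\mc X}[H] \le M_{\mc X}[H] = \sup_{x} \mu_{\mc X}[J_H(x)]$ for each of $H \in \{\tilde F, G_j, -G_j\}$. Substituting these three inequalities into \eqref{eq1:relation:stochM:M} yields the claimed bound in terms of logarithmic norms of the Jacobians.
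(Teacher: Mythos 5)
Your reduction of \eqref{eq2:relation:stochM:M} to \eqref{eq1:relation:stochM:M} via Proposition~\ref{prop:logarithmic:constants:norm} is exactly the paper's step, and your decomposition $\mcfg=h\big(F-\tfrac12\sum_jJ_{G_j}G_j\big)+\sum_jG_j\Delta W_j+\mathcal R$ with a drift estimate through $M^+_{\mc X}$ and a sign-split of the diffusion term is the same skeleton the paper uses. The gap is in the one step that actually matters: the treatment of the $\sqrt h$ scaling of $\Delta W_j$. With $\expt(\Delta W_j)^{\pm}=\sqrt{h/(2\pi)}$, the first-order diffusion contribution to $\expt\|u-v+\mcfg(u)-\mcfg(v)\|^l_{\mc X}-\|u-v\|^l_{\mc X}$ is $O(\sqrt h)$, so dividing by $h$ diverges; you flag this yourself, but your proposed cure contradicts the very bookkeeping that was supposed to produce the bound. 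If the odd-in-$\Delta W_j$ terms are cancelled in expectation by the symmetry of $\Delta W_j$, then the half-moment terms that were to generate $\frac{l}{\sqrt{2\pi}}\sum_j\big(M^+_{\mc X}[G_j]+M^+_{\mc X}[-G_j]\big)$ disappear altogether, and what survives at order $h$ are second-moment (It\^o-type) quantities proportional to $\expt(\Delta W_j)^2=h$ and to $\|G_j(u)-G_j(v)\|^2_{\mc X}/\|u-v\|^2_{\mc X}$, which do not appear on, and are not controlled by, the right-hand side of \eqref{eq1:relation:stochM:M}. Likewise, your use of $\expt[\mathcal R(u)-\mathcal R(v)]=\frac h2\sum_j\big[(J_{G_j}G_j)(u)-(J_{G_j}G_j)(v)\big]$ to ``restore the It\^o drift $F$'' is inconsistent with the target inequality, whose drift term is $M^+_{\mc X}\big[F-\frac12\sum_jJ_{G_j}G_j\big]$, not $M^+_{\mc X}[F]$; the paper keeps the corrected drift and simply discards $\mathcal R(u)-\mathcal R(v)$ as higher order, it never uses its mean.

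The paper closes this step with a device you neither reproduce nor replace: after reducing $D^+\mc K_l(0)$ to $l\,\expt\,D^+\mc K(0)$ (using $\mc K(0)=1$ and dominated convergence, a reduction you only gesture at when handling the $l$-th power) and splitting $\frac1h=\frac1{2h}+\frac1{2h}$ with the doubled arguments $\mc M^{(2h,0)}_{F,G}$ and $\mc M^{(0,2W)}_{F,G}$, it represents the increment as $\Delta W_j=\int_t^{t+h}\xi_j(s)\,ds=h\,\xi_j(\tilde s_j)$ with $\xi_j\sim\mc N(0,1)$, so the factors of $h$ cancel inside the limit; only then does it split on the sign of $\xi_j$ and use $\expt|\xi_j|=\sqrt{2/\pi}$ to obtain the $\frac{l}{\sqrt{2\pi}}$ coefficient. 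Whatever one thinks of that representation, it is the load-bearing step of the paper's argument, and without it (or a rigorous substitute) your plan does not yield \eqref{eq1:relation:stochM:M}: as written, the half-moment computation and the symmetry cancellation cannot both be invoked, and neither alone produces the stated inequality.
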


\noindent See Appendix \ref{sec:appendix} for a proof. 
 
\begin{corollary}\textbf{\textit{(Stochastic contraction based on deterministic logarithmic norms)}}\label{cor:contractivity:1}
 Under the conditions of Proposition \ref{prop:relation:stochM:M}, if there exists $c>0$ such that the right hand side of \eqref{eq2:relation:stochM:M} is upper bounded by $-c$, 
then \eqref{SDE:eq} is  $l-$th moment stochastically contractive. 
\end{corollary}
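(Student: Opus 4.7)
The plan is to chain together the two earlier results, namely Proposition \ref{prop:relation:stochM:M} and Theorem \ref{thm:stochastic:contractivity}, so the proof should be essentially a one-line composition rather than a new argument. Concretely, I would start by observing that, under the standing hypotheses of Proposition \ref{prop:relation:stochM:M} ($F$ and the $G_j$ continuously differentiable, $\mathcal{Y}$ convex), the inequality \eqref{eq2:relation:stochM:M} is available and gives an upper bound on $\mathcal M^{+}_{\mathcal X,l}[F,G]$ in terms of the deterministic logarithmic norms of the Jacobians. The assumption of the corollary is precisely that this right-hand side is bounded above by $-c$, so I immediately obtain
\[
\mathcal M^{+}_{\mathcal X,l}[F,G] \;\leq\; -c \;<\; 0.
\]

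Next I would invoke Theorem \ref{thm:stochastic:contractivity}, whose hypothesis $\mathcal M^{+}_{\mathcal X,l}[F,G] \leq -c$ has just been verified, and read off its conclusion: for any two solutions $X(t)$ and $Y(t)$ of \eqref{SDE:eq} and all $t \geq 0$,
\[
\expt\|X(t)-Y(t)\|_{\mathcal X}^{l} \;\leq\; \expt\|X(0)-Y(0)\|_{\mathcal X}^{l}\, e^{-ct},
\]
which is exactly the definition of $l$-th moment stochastic contractivity (Definition \ref{def:stochastic:contraction}).

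Because both building blocks are already established, there is no real obstacle to overcome; the only thing to be careful about is making sure the hypotheses of Proposition \ref{prop:relation:stochM:M} (convexity of $\mathcal Y$ and $C^{1}$ regularity of $F$ and the $G_j$) are carried over from the statement of the corollary, and that the Lipschitz and growth conditions assumed on $F$ and $G$ in Section \ref{sec:main-results} continue to hold so that well-posedness of \eqref{SDE:eq} and the Milstein-based argument used in the proof of Theorem \ref{thm:stochastic:contractivity} remain valid. Modulo these bookkeeping checks, the corollary follows by direct composition of the two results.
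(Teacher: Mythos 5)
Your proposal is correct and follows exactly the paper's own argument: inequality \eqref{eq2:relation:stochM:M} transfers the bound $-c$ to $\mc M^{+}_{\mc X,l}[F,G]$, and Theorem \ref{thm:stochastic:contractivity} then yields $l$-th moment stochastic contractivity. The extra bookkeeping remarks about the standing hypotheses are fine but not needed beyond what the paper already assumes.
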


\begin{proof}
Since the right hand side of \eqref{eq2:relation:stochM:M} is bounded by $-c$, so is its left hand side, i.e., 
$\mc M^{+}_{\mc X,l}[F,G]\leq -c$. 
Therefore, by Theorem \ref{thm:stochastic:contractivity}, system \eqref{SDE:eq} is stochastically contractive.
\end{proof}

%%%%%%%%%%%%%%%%%%%%%%%%%%%%%%%%%%%%%
\section{Noise-induced contractivity and synchronization}\label{sec:noise-induced:contrac}

 In this section we show how a multiplicative noise can be beneficial for a system and make it contractive. 
Suppose $\dot x = F(x)$
 is not contractive, that is, for any given norm $\|\cdot\|_{\mc X}$, $\sup_{x}\mu_{\mc X}[J_F(x)]\geq 0$. Corollary \ref{cor:contractivity:1} suggests that for appropriate choices of the noise term $G$ and norm $\|\cdot\|_{\mc X}$, the underlying stochastic system $dx = F(x)dt +G(x)dW$ may become stochastically contractive. 
 The reason is that there might exist $G$ and $\|\cdot\|_{\mc X}$ such that for any $x$, 
 $\mu_{\mc X}[J_{F-\frac{1}{2}\sum_jJ_{G_j}G_j} (x)]$ becomes a small enough negative number.
Note that by sub-additivity of the logarithmic norms, 
$0=\mu_{\mc X}[J_{G_j}-J_{G_j}]\leq  \mu_{\mc X}[J_{G_j}]+\mu_{\mc X}[-J_{G_j}].$
Hence, the last sum on the right hand side of \eqref{eq2:relation:stochM:M} is always non-negative. Therefore, the first term must be small enough such that the sum becomes negative. 
 For example, for a linear diffusion term, i.e.,  $G_j(x) = \sigma_j x$, $\sigma_j>0$:
\[\mu_{\mc X}[J_{G_j}]+\mu_{\mc X}[-J_{G_j}] = \sigma (\mu_{\mc X}[I]+\mu_{\mc X}[-I]) =0,\] 
and by sub-additivity of logarithmic norms: 
\begin{align}\label{sub-additivity}
\begin{split}
\mu_{\mc X}\left[J_{F-\frac{1}{2}\sum_jJ_{G_j}{G_j}}\right] &= \mu_{\mc X}\Big[J_F-\frac{1}{2}\sum_j\sigma_j^2I\Big] \\
&\leq \mu_{\mc X}\left[J_F\right]-\frac{1}{2}\sum_j\sigma_j^2. 
\end{split}
\end{align}
 For some large $\sigma_j$s, $\mu_{\mc X}[J_F]-\frac{1}{2}\sum_j\sigma_j^2$ becomes negative, and hence, the SDE becomes stochastically contractive. 
Intuitively, since we assumed all the trajectories sense the same Wiener process, the noise plays the role of a common external force to all the trajectories. Therefore, for a strong enough noise, the trajectories converge to each other.
 See Example \ref{example:theorem1} below.  
 
 Equation~\eqref{sub-additivity}  guarantees that linear multiplicative stochastic terms do not destroy the contraction properties of contraction systems, no matter how large the perturbations are. 

  Note that  Corollary \ref{cor:contractivity:1}  argues that  multiplicative noise may aid contractivity. For an additive noise, i.e., for a state-independent noise term  $G_j(x)\equiv a$, $\mu_{\mc X}[J_{G_j}]=\mu_{\mc X}[-J_{G_j}]=0$ and $\mu_{\mc X}[J_{F-\frac{1}{2}J_{G_j}{G_j}}] = \mu_{\mc X}[J_F]$. Therefore, $\mc M^{+}_{\mc X,l}[F,G] \leq \mu_{\mc X}[J_F]$ and  $\mu_{\mc X}[J_F]\geq0$ do not give any information on the sign of $\mc M^{+}_{\mc X,l}$, and hence, on the contractivity of the SDE.  

\begin{example}\label{example:theorem1}
	We consider the Van der Pol oscillator subject to a multiplicative noise 
     \begin{align}\label{eq:vanderpol}
     \begin{split}
           dx &= \left(x - \frac{1}{3} x^3 -y \right)dt +\sigma g_1(x) dW,\\
           dy &= x dt +\sigma g_2(y) dW, 
    \end{split}
    \end{align}
    where we assume that the Wiener process is one dimensional, $d=1$.
  The state of the oscillator  is denoted by $X=(x,y)^\top$ which its change of rate is described by $F=(x - \frac{1}{3} x^3 -y , x)^\top$. The noise of the system is described by the column vector $G(x,y) = (\sigma g_1(x), \sigma g_2(y))^\top$. 

A simple calculation shows that the Jacobian of $F$ evaluated at the origin is not Hurwitz, i.e., the eigenvalues are not negative. Therefore, the deterministic Van der Pol is not contractive with respect to any norm. 
 Figure~\ref{fig:vandepol:no-noise} depicts two trajectories $(x_1,y_1)^\top$ and $(x_2,y_2)^\top$ of \eqref{eq:vanderpol} in the absence of noise which do not converge. 

In Figure~\ref{fig:vandepol:additive-noise}, an additive noise $g_1(x)=g_2(y)=1$ with intensity $\sigma=0.35$ is added. We observe that the trajectories still do not converge. As discussed above,  our result in Corollary \ref{cor:contractivity:1} does not guarantee noise-induced contractivity in the case of additive noise. 

In Figure~\ref{fig:vandepol:with-noise}, a state-dependent multiplicative noise 
 $(g_1(x),  g_2(y))=(1+4x, 1+4y)$ with noise intensity $\sigma=0.35$ is added and two trajectories with initial conditions $(1,-1)^\top$ and $(2,-2)^\top$  are plotted. We observe that the two trajectories converge to each other. A simple calculation shows that $\mu_2[J_G]+ \mu_2[-J_G] = 4\sigma -4\sigma=0$, where $\mu_2[A] = \frac{1}{2} \max\lambda(A+A^\top)$ is the logarithmic norm induced by $L^2$ norm and  $\max\lambda$ denotes the largest eigenvalue. Also, 
 \begin{align*}
 \sup_{(x,y)}\mu_2[J_{F-\frac{1}{2}J_GG}(x,y)] &= \sup_{(x,y)}\max\{1-x^2-8\sigma^2, -8\sigma^2\}\\
 &=1-8\sigma^2.
  \end{align*}
   By Corollary \ref{cor:contractivity:1}, 
 $\mc M^{+}_{\mc X,2}[F,G]\leq 2 (1-8\sigma^2)$. Therefore, for $\sigma>\frac{1}{\sqrt 8} \approx 0.35$, $\mc M^{+}_{\mc X,l}[F,G]<0$ and the system becomes  $l-$th moment stochastically contractive for any $l\geq1$.
 
 Figure \ref{fig:expected-value} shows the mean square difference of the two solutions plotted in Figure~\ref{fig:vandepol:with-noise} over 5000 simulations, which converges to zero, as expected. 
\end{example}
 		\begin{figure}
	\centering	               
		\subfigure[No contraction in deterministic system]{
			\includegraphics[width=0.36\textwidth]{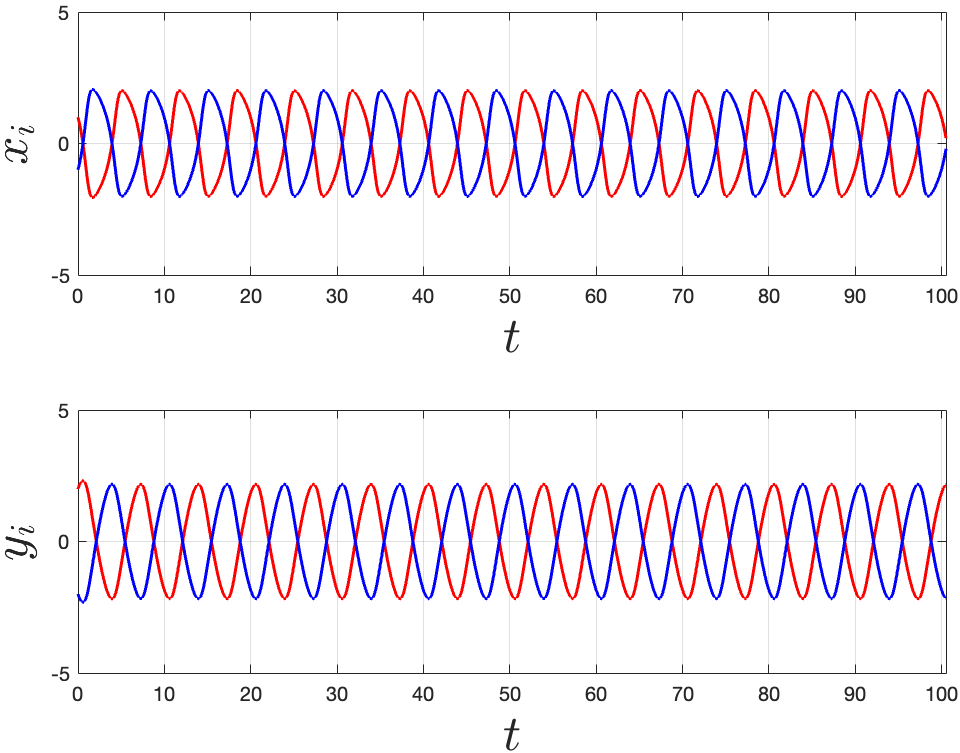}
			\label{fig:vandepol:no-noise}
		}
		
				\subfigure[No contraction with additive noise]{
			\includegraphics[width=0.36\textwidth]{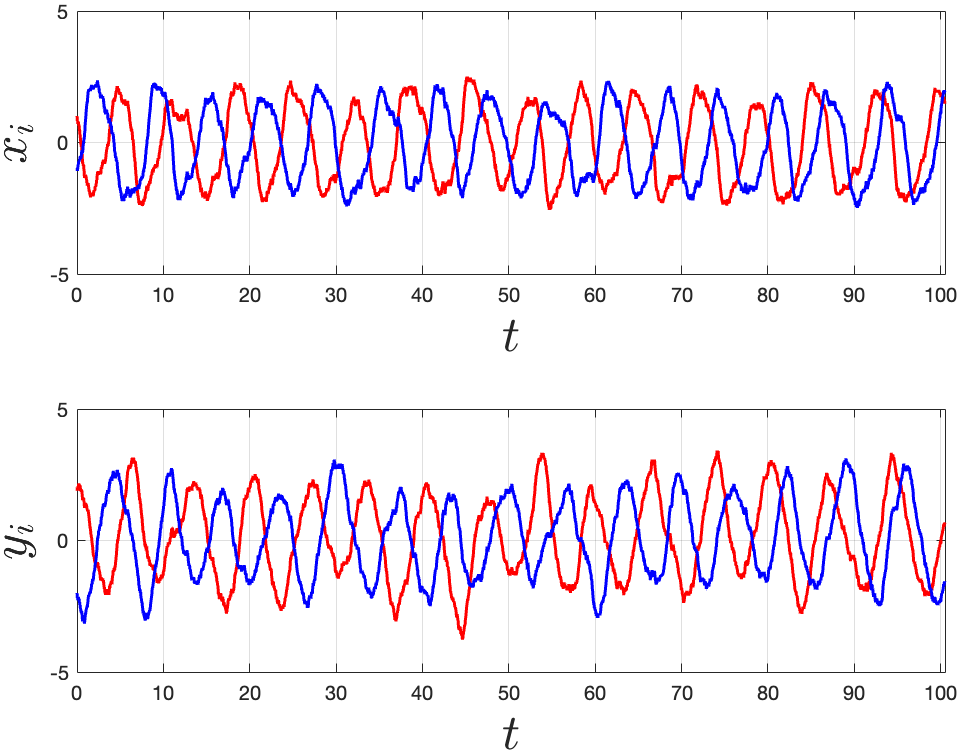}
			\label{fig:vandepol:additive-noise}
		}
			\subfigure[Multiplicative noise-induced contraction]{
		\includegraphics[width=0.36\textwidth]{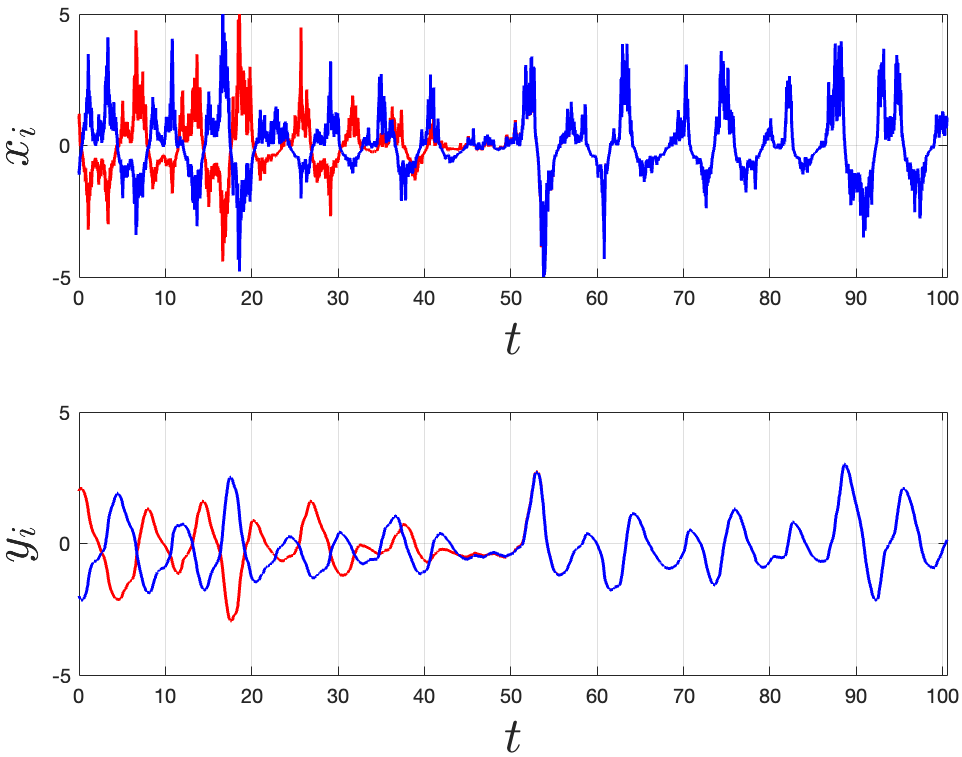}  
		\label{fig:vandepol:with-noise}
	}
	\subfigure[The mean square of difference of two solutions]{
	\includegraphics[width=0.39\textwidth]{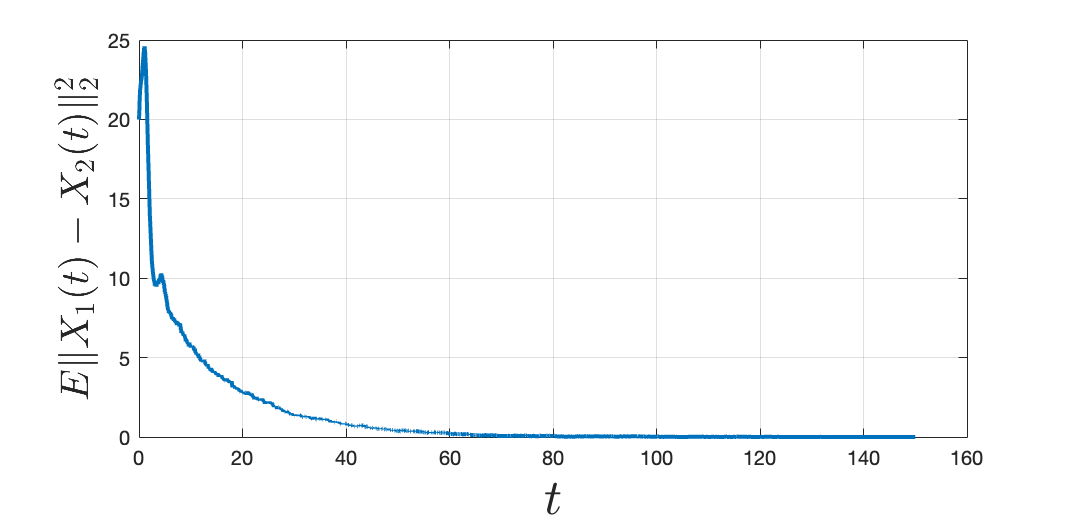}
	\label{fig:expected-value}
	}
		\caption{Contraction behavior of van der Pol oscillator given in Example \ref{example:theorem1}.  
		(a) Two trajectories of the deterministic oscillator are plotted to show the system is not contractive. 
		(b) An additive noise ($g_1(x)=g_2(y)=1$) with intensity $\sigma=0.35$ is added but does not make the system contractive. 
		 (c)  A multiplicative noise ($g_1(x)=1+4x, g_2(y)=1+4y$) with intensity $\sigma=0.35$ is added which makes the system contractive. 
		 (d) The mean square difference of two solutions over 5000 simulations is shown. 
	}
	\end{figure}

 Now consider a network of $N$ isolated nonlinear systems which are driven by a multiplicative common noise, i.e., the only interaction between the systems is through the common noise. The dynamics of such a network can be described by the following SDEs. For $i=1,\ldots, N,$
\begin{align}\label{eq:synchronization}
d X_i = F(X_i) dt +\sigma G(X_i) dW,
\end{align}
with initial conditions $X_i(0)=X_{i0}$. 
Then \eqref{eq:synchronization}  stochastically synchronizes if for any $i,j,$ $\expt\|X_i(t)-X_j(t)\|_{\mc X}^l\to0$ as $t\to\infty$, which can be concluded from  $d X = F(X) dt +\sigma G(X) dW$ being contractive. 

%%%%%%%%%%%%%%%%%%%%%%%%%%%%%%%%%%%%%
\section{Conclusions}\label{sec:conclusion}

 Deterministic logarithmic Lipschitz constants generalize classical logarithmic norms to nonlinear operators. These constants are proper tools to characterize the contraction properties of ODEs.   In this work, we introduced the notions of \textit{stochastic} logarithmic Lipschitz constants and used them to extend contraction theory to a class of SDEs. Unlike some logarithmic norms,  computing stochastic (or deterministic) logarithmic Lipschitz constants is not straightforward.   
 Therefore, to make our theory more applicable, we found some relationships between stochastic logarithmic Lipschitz constants and logarithmic norms. 
We discussed how multiplicative noise could aid contractivity and foster stochastic synchronization in nonlinear dynamical systems.  

In this paper, we assumed that a common Wiener process drives all the trajectories. Studying contractivity (respectively, network synchronization) in the case that distinct and independent Wiener processes drive the trajectories (respectively, nonlinear dynamical systems) is a topic of future investigations.  
In this case, we need to define an ``approximate" contraction in the sense that the trajectories exponentially enter a tube and stay there but do not necessarily converge. See \cite{slotine_stochastic} 
 (respectively, \cite{2021_Aminzare_Srivastava_Cybernetic}) for this type of contractivity (respectively, synchronization) which are based on stochastic Lyapunov function. 
Proposition \ref{prop:relation:stochM:M} provides a mechanism to characterize stochastic contractivity in a class of nonlinear SDEs and understand stochastic synchronization in networks driven by common noise. Generalizing this result to the case of independent Wiener increments is another topic of future investigations. 

%%%%%%%%%%%%%%%%%%%%%%%%%%%%%%%%%%%%%
\section*{ACKNOWLEDGMENT}
   The author would like to thank Michael Margaliot for his comments that improved this paper's exposition. This work is supported by Simon Foundations' grant $712522.$

%%%%%%%%%%%%%%%%%%%%%%%%%%%%%%%%%%%%%
\section{Appendix}\label{sec:appendix}

\noindent\textbf{Proof of Proposition~\ref{prop:properties:stochastic:constants}.}

\noindent{1.} For $h>0$, let $\Omega(h) = \frac{\|u-v+hF(u) -hF (v)\|_{\mc X}}{\|u-v\|_{\mc X}}$. 
Using the equality $\Omega^l-1= (\Omega-1)(\Omega^{l-1}+\cdots+1)$ and the fact that $\lim_{h\to0}\Omega(h)=1$, we have, 
{\small{
\begin{align*}\label{}
&\mc M^{+}_{\mc X,l}[F, 0]\\
&=\displaystyle\sup_{u\neq v\in \mc Y}\lim_{h\to0^{+}}\frac{1}{h}
\left(\frac{\|u-v+h(F(u) -F (v))\|^l_{\mc X}}{\|u-v\|^l_{\mc X}}-1\right)\\
&=\displaystyle\sup_{u\neq v\in \mc Y}\lim_{h\to0^{+}}\frac{l}{h}
\left(\frac{\|u-v+h(F(u) -F (v))\|_{\mc X}}{\|u-v\|_{\mc X}}-1\right) \\
&= l M_{\mc X}^{+}[F].
\end{align*}
}}
\noindent{2.} The proof is straightforward from the definition of $\mc M^{+}_{\mc X,l}$ and $\mc M_{\mc X,l}$. 

\noindent{3.} By the definition of $\mcFG$ given in Notation \ref{notation},
\[\mc M_{\alpha F,\sqrt\alpha G}^{(h,W)} = \mc M_{ F, G}^{(\alpha h,\sqrt\alpha W)}.\]
 Using the fact that $W$ is of order $\sqrt h$, and therefore, $\sqrt \alpha W$ is of order $\sqrt{\alpha h}$, we have:
{\small{\begin{align*}\label{}
&\mc M^{+}_{\mc X,l}[\alpha F,\sqrt\alpha G]=\displaystyle\sup_{u\neq v\in \mc Y}\lim_{h\to0^{+}}\frac{1}{h}\times\\
&\left(\expt\;\frac{\|u-v+\mc M_{ F, G}^{(\alpha h,\sqrt\alpha W)}(u) -\mc M_{ F, G}^{(\alpha h,\sqrt\alpha W)} (v)\|^l_{\mc X}}{\|u-v\|^l_{\mc X}}-1\right)\\
&=\displaystyle\sup_{u\neq v\in \mc Y}\lim_{h\to0^{+}}\frac{\alpha}{\alpha h}\times\\
&\left(\expt\;\frac{\|u-v+\mc M_{ F, G}^{(\alpha h,\sqrt\alpha W)}(u) -\mc M_{ F, G}^{(\alpha h,\sqrt\alpha W)} (v)\|^l_{\mc X}}{\|u-v\|^l_{\mc X}}-1\right)\\
&=\alpha \mc M_{\mc X, l}^{+}[ F, G].
\end{align*}
}}
\noindent The second inequality in part 3 can be obtained by the definition of $\mc M^{+}_{\mc X,l}$ and the following equality.
\[2\mc M_{F^1+F^2, G^1+G^2}^{(h,W)} = \mc M_{ F^1, \frac{G^1+G^2}{\sqrt 2}}^{(2 h,\sqrt 2 W)}+ \mc M_{ F^2, \frac{G^1+G^2}{\sqrt 2}}^{(2 h,\sqrt 2 W)}.\]
\oprocend 

\textbf{Proof of Proposition~\ref{prop:relation:stochM:M}.}
For fixed $u, v, F, G,$ and $h>0$, define $\mc K(h)$ and $\mc K_l(h)$ as follows: 
\begin{align*}
\mc K(h) &:= \frac{\|u-v+\mcfg(u) -\mcfg (v)\|_{\mc X}}{\|u-v\|_{\mc X}}, \\
\mc K_l(h) &:= \expt\;\frac{\|u-v+\mcfg(u) -\mcfg (v)\|^l_{\mc X}}{\|u-v\|^l_{\mc X}}\\
&=\expt\; \mc K(h)^l.
\end{align*}
Note that $\mc M^{+}_{\mc X,l}[F,G] = \sup_{u\neq v} D^+\mc K_l(0)$.  A simple calculation shows that the derivative of $\mc K_l$ evaluated at $h=0$ is equal to 
$
l  \;\expt(\mc K(h)^{l-1} \mc D^+\mc K(h))|_{h=0} = l\;\expt (D^+\mc K(0)), 
$
since $\mc K(0)=1$ and by Dominated Convergence Theorem, the limit in the definition of $D^+$ and the expected value  can be exchanged. 
Therefore, by the definition of upper Dini derivative: 
{\small{
\begin{align*}
&D^+\mc K(0)\\
&= \lim_{h\to 0^+} \frac{l}{h} \left\{\frac{\|u-v+\mcfg(u) -\mcfg (v)\|_{\mc X}}{\|u-v\|_{\mc X}} -1 \right\}\\
&\leq \lim_{h\to 0^+} \frac{l}{2h} \left\{\frac{\|u-v+\mc M_{F,G}^{(2h,0)}(u) -\mc M_{F,G}^{(2h,0)} (v)\|_{\mc X}}{\|u-v\|_{\mc X}} -1 \right\}\\
&+\lim_{h\to 0^+} \frac{l}{2h} \left\{\frac{\|u-v+\mc M_{F,G}^{(0,2W)}(u) -\mc M_{F,G}^{(0,2W)} (v)\|_{\mc X}}{\|u-v\|_{\mc X}} -1 \right\}\\
&\leq M^+_{\mc X} \Big[F-\frac{1}{2}\sum_jJ_{G_j} G_j\Big] \\
&+\lim_{h\to 0^+} \frac{l}{2h}\left\{\frac{\|u-v+ 2\sum_j(G_j(u)-G_j(v))\Delta W_j\|_{\mc X}}{\|u-v\|_{\mc X}} -1 \right\}.
\end{align*}
}}
 The first inequality is obtained  by writing   $\mcfg$ as $ \mc M_{F,G}^{(h,0)}+\mc M_{F,G}^{(0,W)}$. 
The second inequality is obtained using the following expressions: 
$\mc M_{F,G}^{(h,0)}= h\big(F - \frac{1}{2}\sum_jJ_{G_j} G_j\big)$ and 
$\mc M_{F,G}^{(0,W)}=\sum_j G_j\Delta W_j + \mathcal{R}$. 
The term $\mathcal{R}(u)- \mathcal{R}(v)$ is omitted from the last term because $\mathcal{R}$ contains a factor of $h^2$ which vanishes when $h\to0$. 
The Wiener increment  satisfies
\[\Delta W_j= W_j(t+h) - W_j(t) =\int_{t}^{t+h} \xi_j(s)ds =  h\xi_j(\tilde s_j),\]
 where $\xi_j$ is the standard normal distribution, $\xi_j\sim \mc{N}(0,1)$ and $t<\tilde s_j<t+h$. 
Since with probability $\frac{1}{2}$, $\xi_j$ is positive or negative, the last term of the above relationship becomes:
{\small{
\begin{align*}
&\lim_{h\to 0^+} \frac{1}{2h}\left\{\frac{\|u-v+ 2h\sum_j(G_j(u)-G_j(v))\xi_j(\tilde s_j)\|_{\mc X}}{\|u-v\|_{\mc X}} -1 \right\}\\
&\leq M^+_{\mc X} \Big[\sum_j\xi_j G_j\Big]\leq \sum_jM^+_{\mc X} [\xi_j G_j]\\\
&= \frac{1}{2}\sum_j\left(M^+_{\mc X} [|\xi_j|G_j]+  M^+_{\mc X} [-|\xi_j|G_j]\right)\\
&= \frac{1}{2} \sum_j  |\xi_j| \left(M^+_{\mc X} [G_j]+  M^+_{\mc X} [-G_j]\right).
\end{align*}
}}
Therefore, by taking $\expt$ from both sides, we get
\begin{align*}
l\;\expt \Big(D^+\mc K(0)\Big)&\leq l \sum_j M^+_{\mc X} \Big[F-\frac{1}{2}\sum_jJ_{G_j} G_j\Big] \\
&+ \frac{l}{2} \sum_j \expt |\xi_j| \left(M^+_{\mc X} [G_j]+  M^+_{\mc X} [-G_j]\right).
\end{align*}
Equation~\eqref{eq1:relation:stochM:M} is obtained by plugging $\expt |\xi_j| = \sqrt\frac{2}{\pi}$.  

\noindent Equation~\eqref{eq2:relation:stochM:M} holds by Proposition~\ref{prop:logarithmic:constants:norm}.
\oprocend

%%%%%%%%%%%%%%%%%%%%%%%%%%%%%%%%%%%%%%%%%%%%%%%%%%%%%%%%%%%%%%%%%%%%%

\end{document}